\newtheorem{lemma}{Lemma}
\newtheorem{theorem}{Theorem}
\newtheorem{corollary}{Corollary}
\newcommand{\tnMarkZero}{\ensuremath{0\hspace{-1.2ex}/\,}}
\newcommand{\tnMarkOne}{\ensuremath{1\hspace{-1.2ex}/\,}}
\newcommand{\tnMarkZeroB}{\ensuremath{\textbf{0}\hspace{-1.2ex}/\,}}
\newcommand{\tnMarkOneB}{\ensuremath{\textbf{1}\hspace{-1.2ex}/\,}}
\title{A Small Universal Petri Net}
\author{Dmitry A. Zaitsev
\institute{International Humanitarian University\\
Department of Computer Science and Innovation Technology}
\institute{Fontanskaya Doroga, 33, Odessa 65009, Ukraine}
\email{daze@acm.org}
}
\begin{document}
\maketitle

\begin{abstract}
A universal deterministic inhibitor Petri net with 14 places, 29 transitions and 138 arcs was constructed via simulation of Neary and Woods' weakly universal Turing machine with 2 states and 4 symbols; the total time complexity is exponential in the running time of their weak machine. To simulate the blank words of the weakly universal Turing machine, a couple of dedicated transitions insert their codes when reaching edges of the working zone. To complete a chain of a given Petri net encoding to be executed by the universal Petri net, a translation of a bi-tag system into a Turing machine was constructed. The constructed Petri net is universal in the standard sense; a weaker form of universality for Petri nets was not introduced in this work.
\end{abstract}

\section{Introduction}

Standard universal Turing machines (TM) simulate an arbitrary given TM using tape with all cells, except a finite working zone, filled with infinite repetition of a blank symbol ($\lambda$). In weakly universal Turing machines, an infinite repetition of definite blank words is written: $w_l$ to the left and $w_r$ to the right of the working zone. Such a modification leads to obtaining smaller universal machines. 

Neary and  Woods \cite{NE08,NW09W} improved on the work of Cook~\cite{C04}, to give the smallest weakly universal Turing machines (WUTM) with state-symbol pairs of (2,4), (3,3), and (6,2); moreover, these machines work in polynomial time. 

A universal Petri net (UPN) was constructed in \cite{ZV12UPN} using direct encoding of an arbitrary Petri net by a definite number of nonnegative integer variables which are loaded as a marking of definite places into the universal net. To obtain the small universal Petri net with 14 places and 42 transitions \cite{ZV13}, small universal TM of Neary and Woods (6,4) \cite{NW09FI} was simulated by deterministic inhibitor Petri net (DIPN). Besides, for Petri net encoding, a translation of DIPN into bi-tag system was constructed.

In the present paper, universal DIPN with 14 places and 29 transitions is presented as a result of WUTM(2,4) \cite{NE08,NW09W} simulation by DIPN. The simulation of weakness in TMs required an amendment of the approach used in \cite{ZV13} in order to produce the codes of the blank words during the simulation. Moreover, for a given DIPN encoding to be executed by simulated WUTM(2,4), a translation from bi-tag system to TM was constructed.

Constructed in the present paper universal Petri net is universal in the standard sense; a weaker from of universality for Petri nets was not introduced in this work.

\section{Basic Notations and Definitions}

\subsection{A deterministic inhibitor Petri net}

\textit{A deterministic inhibitor Petri net} (DIPN) \cite{ZV13} is a bipartite directed graph supplied with a dynamic process that has deterministic behavior. DIPN is denoted as a quadruple $N=(P,T,F,{\mu}^0)$ , where $P$ and $T$ are disjoint sets of vertices called places and transitions respectively, the mapping $F$ defines arcs between vertices, and the mapping ${\mu}^0$ represents the initial state (marking). The transition choice order is defined by their enumeration $T=\{t_1,t_2,...,t_n\}, n=|T|$; places are also supposed been enumerated $P=\{p_1,p_2,...,p_m\}, m=|P|$, and the places' marking is represented as a vector $\overline{\mu}, {\mu}_j={\mu}(p_j)$ with integer index $j, 0<j \le m$, where ${\mu}_j$ is a nonnegative integer equal to the number of tokens situated in place $p_j$. 

The mapping ${F:(P \times T)\rightarrow \{0,w^{-},-1\}\cup (T\times P)\rightarrow\{0,w^{+}\}}$ defines arcs, where $w^{-}$ and $w^{+}$ are positive integers, a zero value corresponds to the arc absence, a positive value -- to the regular arc with indicated multiplicity, and a minus unit -- to the inhibitor arc. As it was mentioned in \cite{ZV13}, Petri nets with multiple arcs are easily converted to \textit{ordinary} Petri nets with regular arcs' multiplicity equal to unit. 

In graphical form, places are drawn as circles and transitions as rectangles. An inhibitor arc is represented by a small hollow circle at its end, and a small solid circle (read arc) represents the abbreviation of a loop. Regular arc's multiplicity greater than unit is inscribed on it and place's marking greater than zero is written inside it.

The \textit{behavior} (dynamics) of a DIPN is described by the state equation of inhibitor Petri net \cite{ZV12UPN} supplemented by the condition of the firable transition choice having the minimal number. The present work considers the behavior as a result of sequential applying the following \textit{transition firing rule}: 

\begin{enumerate}

\item net transitions ${t_{i}}$ are checked sequentially with integer index $i$ ranging from $1$ to $n$;

\item transition ${t_{{i}}}$ is firable iff each place $p_j$ with a regular arc directed to transition ${t_{{i}}}$ contains at least $w^{-}_{j,i}$ tokens, and each place with an inhibitor arc directed to transition ${t_{{i}}}$ does not contain tokens (${\forall j:}$ ${\mu _{{j}}\ge w^{-}_{j,i}}$ when ${F(p_{{j}},t_{{i}})=w^{-}_{j,i}}$ and ${\mu_{{j}}=0}$ when ${F(p_{{j}},t_{{i}})=-1}$);

\item the first firable transition fires (with the minimal value of index ${i}$); 

\item when transition ${t_{{i}}}$ fires, it

\begin{enumerate}
\item extracts $w^{-}_{j,i}$ tokens from each its input place (for regular arcs) ${p_{{j}}:F(p_{{j}},t_{{i}}) \ge w^{-}_{j,i}}$;

\item puts $w^{+}_{k,i}$ tokens into each its output place ${p_{{k}}:F(t_{{i}},p_{{k}})=w^{+}_{k,i}}, w^{+}_{k,i}>0$;
\end{enumerate}

\item the net halts if firable transitions are absent.

\end{enumerate}

Step-by-step firing transitions consumes and produces tokens within their incidental places that looks rather like moving tokens between places of a Petri net. In a classical Petri net, an arbitrary firable transition is chosen to fire at a step that induces a nondeterministic character of their behavior. Even general nets with priorities are nondeterministic since equal priorities of firable transitions may occur. The behavior of a DIPN is restricted by the firing the firable transition with the minimal number. Thus, transitions numbers are interpreted as their priorities which are different inducing the deterministic character of a DIPN behavior. 

Petri nets are considered as a graphical canvas for concurrent programs, where a paradigm of computation is not restricted by an instructions sequence or a set of sequences, promising hyper-parallel implementations.

\subsection{Turing machines}

\textit{A Turing machine} (TM) is denoted as ${M=(\Omega,\Sigma,f,us,uh)}$, where ${\Omega}$ is the set of internal states, ${\Sigma}$ is the tape alphabet, $f$ is the transition function, $us$ is the start state, $uh$ is the halt state. 

A Turing machine consists of the tape infinite in both directions divided in cells and the control head that moves along the tape, reads and replaces symbols in cells. An instruction of a Turing machine is a quintuple ${(x,u,x',v,u')}$, where ${x}$ is the current cell symbol, ${u}$ is the current internal state, ${x'}$ is a new symbol that replaces ${x}$ in the current cell, ${v}$ denote one of the head moves either to the left ($left$) or to the right ($right$), or stand still ($stay$), ${u'}$ is a new internal state. Usually, instructions are represented by a table (matrix) with keys given by the pair ${(x,u)}$ and values written as triples of the form ${x'vu'}$. Initially, blank symbols are written into cells. Computation of a Turing machine consists in transforming an input word written on the tape into an output word read when the machine halts. The minimal part of the tape containing nonblank symbols is named a \textit{working zone}.

A universal Turing machine is a prototype of computers. It accepts as input any given Turing machine (program) and its tape word (data) and executes the program over the data. The crucial point is an encoding of a given program and data in an alphabet with limited number of symbols and employing a limited number of internal states. 

The following notation is adopted for minimal universal Turing machines \cite{NE08}: ${{UTM}(m,n)}$, where ${m}$ is the number of internal states and ${n}$ is the number of tape symbols (including the blank symbol). The size of the transition table of ${(m,n)}$-machine is ${m \times n}$; in case not all feasible instructions are employed, the actual number of instructions ${l\le m \times n}$ is an extra characteristic.

\subsection{A bi-tag system}

\textit{A bi-tag system} (BTS) \cite{NW09FI} is a quadruple ${B=(A,E,e_{{h}},R)}$, where ${A}$ and ${E}$ are disjoint finite sets of symbols (alphabets), ${e_{{h}} \in E}$ is the halt symbol and ${R}$ is the finite set of productions in one of three valid forms:

{\centering ${R(a)=a},~{R(e,a)\in {AE}},~{R(e,a) \in {AAE}}$, \par}

where ${a\in A}$, ${e\in E}$ and ${R}$ are defined on all elements of the set ${\left\{A\cup ((E-\{e_{{h}}\})\times A)\right\}}$ and undefined on all elements of the set ${\{e_{{h}}\}\times A}$; the bi-tag system is deterministic. A BTS \textit{configuration} is a word of the form ${w=A^{{*}}({EA}\cup {AE})A^{{*}}}$.

A BTS \textit{computation step} consists of the application of productions in one of two valid ways:
\begin{itemize}
\item if ${w=as'}$, then ${as'{\succ}s'R(a)}$,
\item if ${w=eas'}$, then ${eas'{\succ}s'R(e,a)}$.
\end{itemize}

A BTS \textit{computation} is a finite sequence of computation steps that are consecutively applied to an initial configuration. If ${e_{{h}}}$ is the leftmost symbol in the current configuration, the computation halts.

\section{Design of UPN(14,29)}

At first we choose the smallest known (weakly) universal TM for simulation. The two machines of Turlough Neary and Damien Woods \cite{NE08,NW09W} with state-symbol pairs of (2,4) and (3,3) use 8 instructions, so they are equal in size for our purpose -- they produce the same number of Petri net transitions when simulating TM transition function. We choose WUTM(2,4), since it is defined on all state-symbol pairs.

Apart from Turing machines, many other computationally universal systems are known such as Markov normal algorithms \cite{ZV12NAM}, tag systems \cite{LDM06}, cellular automata \cite{C04} etc. Progress in constructing smaller system for one of these models often evokes immediate progress for others. Manifold translations techniques among them also influence this process and are themselves an object of investigation aimed to reduction in size and complexity. Direct simulation and chains of translations approaches compete. Recent results \cite{NE08} show that direct simulation is rather efficient. Traditionally, translations into tag systems are applied to construct small universal TM. 

In the present paper WUTM(2,4) \cite{NE08,NW09W} is simulated by a DIPN via technique presented in \cite{ZV13}. But WUTM(2,4) itself simulates cellular automaton Rule~110 (CA) which was shown universal by Matthew Cook \cite{C04}. Its input should be encoded in cyclic tag system (CTS) \cite{C04}. To fill the gap in the chain of translating DIPN into cyclic tag system, a translation from bi-tag system to either cyclic tag system or 2-tag system is requited. But considering that cyclic tag system could be constructed directly on TM \cite{NE08,NW09W}, the following chain of translations was chosen

{\centering ${{DIPN}\rightarrow {BTS}\rightarrow {TM}}$. \par}

In the above chain, the translation ${{DIPN}\rightarrow {BTS}}$ was constructed in \cite{ZV13}, and the translation \break ${{BTS}\rightarrow {TM}}$ is constructed in the present paper. It is supposed, that afterwards, the translations\break  ${{TM}\rightarrow {CTS}}$ \cite{NE08,NW09W} and ${{CTS}\rightarrow {CA}}$ \cite{C04} work, transforming obtained TM into CA code of glides for execution by WUTM(2,4) which simulates cellular automaton behavior. So the complete chain looks like

{\centering ${{DIPN}\rightarrow {BTS}\rightarrow {TM}\rightarrow {CTS}\rightarrow {CA}}$. \par}

The technique in \cite{ZV13} simulates TM tape as a triple ${(L,X,R)}$ consisting of the current cell symbol code ${X}$ and codes of two stacks representing the left ${L}$ and right ${R}$ parts of the tape with respect to the current cell symbol. The blank symbol is encoded by zero so hitting the stack bottom is not recognized but produces a new blank symbol. Simulating weak TM by DIPN is rather different task.

In the present paper, the tape alphabet is encoded starting from unit via the function $s(x)$ given by column 2 of Table~\ref{tab:wutm24} while states of TM are encoded starting from zero via the function $s(u)$ given by row 2 of Table~\ref{tab:wutm24}; the radices for symbols and states encoding are denoted as ${{rX}=5}$ and ${{rU}=2}$ respectively. Symbol $"r"$ is used to denote a radix, and symbol $"s"$ -- to denote a code; notation like ${s(\alpha)}$ represents the code of an object ${\alpha }$, where ${\alpha}$ could be either a separate symbol or a chain of symbols written on the tape. Codes of separate symbols are given by Table~\ref{tab:wutm24} and chains of symbols are encoded as numbers in a positional (radix) notation

\begin{equation}
  s(x_{{l-1}}x_{{l-2}}{...}x_{{0}})=\sum_{i=0}^{l-1}{s(x_{{i}})\cdot r^{{i}}}.
\end{equation}

Table~\ref{tab:wutm24} represents WUTM(2,4) behavior \cite{NE08,NW09W} with respect to the chosen encoding; the notation $"{\Sigma {\backslash} \Omega}"$ means the title $"{\Sigma}"$ of the column and the title $"{\Omega}"$ of the row. Note that, in WUTM(2,4), ${\Omega=\{u_{{1}},u_{{2}}\}}$, ${\Sigma=\{0,1,{\tnMarkZero},{\tnMarkOne}\}}$, ${{us}=u_{{1}}}$ and the machine does not halt in usual sense, since it simulates Rule~110 computations with special halt conditions \cite{C04}.

\begin{table}
  \begin{center}
    \begin{tabular}{|c|c|c|c|}
      \hline
      $\Sigma \backslash \Omega$ &  & $u_1$ & $u_2$\\
      \hline
        & $s(\Sigma) \backslash s(\Omega)$ & 0 & 1\\
      \hline
      0 & 1 & $3,left,0$ & $4,right,0$\\
      \hline
      1 & 2 & $4,left,1$ & $3,left,1$\\
      \hline
      $\tnMarkZero$ & 3 & $4,left,0$ & $1,right,1$\\
      \hline
      $\tnMarkOne$ & 4 & $4,left,0$ & $2,right,1$\\
      \hline
    \end{tabular}
  \end{center}
  \caption{WUTM(2,4) behavior \cite{NE08,NW09W} and its encoding}
  \label{tab:wutm24}
\end{table}

Remind that in weakly universal Turing machines, an infinite repetition of definite blank words is written: $w_l$ to the left and $w_r$ to the right of the working zone. The left and right blank words \cite{NE08,NW09W} are ${w_{{l}}=  00{\tnMarkZero}1}$ and ${w_{{r}}=0{\tnMarkOne}{\tnMarkZero}{\tnMarkZero}0{\tnMarkOne}}$ respectively. According to (1), the codes of the left and right blank words are calculated as follows:

{\centering
${{sw}_{{l}}=s(w_{{l}})=((s(0)\cdot {rX}+s(0))\cdot {rX}+s({\tnMarkZero}))\cdot {rX}+s(1)=((1\cdot 5+1)\cdot 5+3)\cdot  5+2={167}},$

${{sw}_{{r}}=s(w_{{r}})=((((s({\tnMarkOne})\cdot {rX}+s(0))\cdot {rX}+s({\tnMarkZero}))\cdot {rX}+s({\tnMarkZero}))\cdot {rX}+s({\tnMarkOne}))\cdot {rX}+s(0)=}$

${=((((4\cdot 5+1)\cdot 5+3)\cdot 5+3)\cdot 5+4)\cdot 5+1={13596}}$. \par}

Following the stack order of encoding/decoding regarding the current cell symbol, the left blank word was encoded from left to right while the right blank word -- from right to left.

Encoding tape symbols starting from unit allows recognition the left and right edges of the tape working zone via encountering zero value of the corresponding code. Obtaining zero is possible for only one tape side at a time: for the left side after moving to the left and for the right side after moving to the right. Substituting obtained zero by the corresponding code ${{sw}_{{l}}}$ or ${{sw}_{{r}}}$ models infinite blank words.

\section{Simulating WUTM(2,4) by DIPN}

WUTM(2,4) is simulated by a DIPN using the technique presented in \cite{ZV13}; as a result UPN(14,29) was obtained which general scheme of work is shown in Fig.~\ref{fig:upn1429}. Subnets are depicted as rectangles with double line border. Some vertices have mnemonic names besides their numbers; place $U$ contains encoded TM state $s(u)$, place $X$ contains encoded current cell symbol $s(x)$ (according to Table~\ref{tab:wutm24}), and places $L$ and $R$ contain encoded via equation (1) left and right parts of the tape working zone respectively regarding the current cell. 

Comparing \cite{ZV13}, the following modifications and amendments have been done: 
\begin{itemize}
\item subnet $FS$ was constructed on Table~\ref{tab:wutm24}, which simulates WUTM(2,4) transition function;
\item arc multiplicity of subnets $MA5LR$, $MD5LR$ is equal to 5 (instead of 4 in $MA4LR$, $MD4LR$);
\item to simulate peculiarities of weakly universal TM work, two transitions $lb$ and $rb$ are appended which add the blank word codes ${{sw}_{{l}}}$ and ${{sw}_{{r}}}$ to the codes of the left and right parts of tape $L$ and $R$ correspondingly when its value is equal to zero.
\end{itemize}

At the beginning of each computation step, place $STEP$ launches subnet $FS$, which simulates \break WUTM(2,4) transition function $f$ given in  Table~\ref{tab:wutm24}. Subnet $FS$ produces the encoding $s(u')$ of the new state and the encoding $s(x')$ of the new symbol in places $U$ and $X$ respectively to simulate the Turing machine instruction. Subnet $FS$ also puts a token into place $RIGHT$ if the simulated machine instruction is a right move instruction. Place $MOVE$ launches the sequence of subnets $MA5LR$, $MD5LR$, which simulates the control head moves, after subnet $FS$ has finished. At the end of a simulated computation step a token is put into place $STEP$ that allows the simulation of the next instruction to begin.

\begin{figure}
  \centering
    \includegraphics [width=0.6\textwidth] {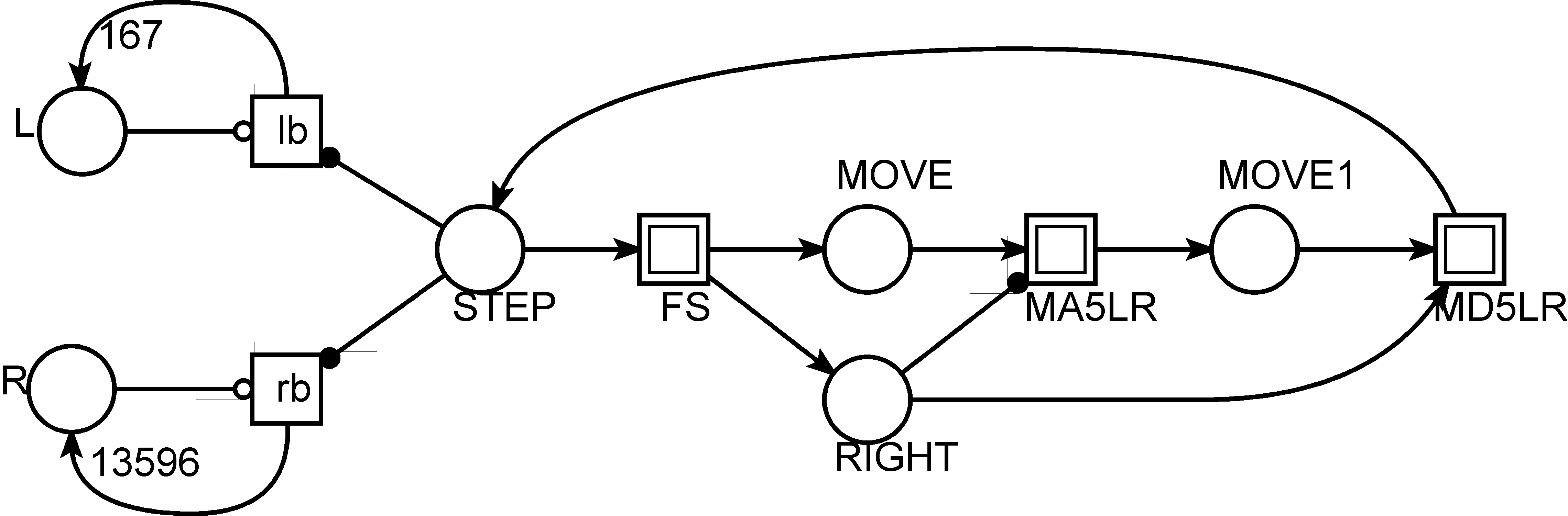}
  \caption{General arrangement of UPN(14,29)}
  \label{fig:upn1429}
\end{figure}

Subnet $FS$ (shown in  Fig.~\ref{fig:fs}) simulates the transition function of WUTM(2,4) as follows: the Turing machine instruction for each pair $(x,u)$ from Table~\ref{tab:wutm24} is encoded by a transition with the label $(s(x), s(u))$ in the $FS$ net; its input arcs from places $X$ and $U$ have corresponding multiplicity, zero multiplicity means the arc absence. For a TM instruction ${(x,u,x',v,u')}$, output arcs to places $X$ and $U$ have multiplicity ${s(x')}$ and ${s(u')}$ correspondingly with an extra arc to place $RIGHT$ when ${v=right}$. Subnet $FS$ of UPN(14,29) is represented in Fig.~\ref{fig:fs}; besides, each transition has an incoming arc from place $STEP$ and outgoing arc to place $MOVE$ which are not shown explicitly. Transitions are not enumerated yet, so the required relations of priorities are shown via arcs connecting transitions, providing only one firable transition on each step of TM simulation.

\begin{figure}
  \centering
    \includegraphics [width=0.7\textwidth] {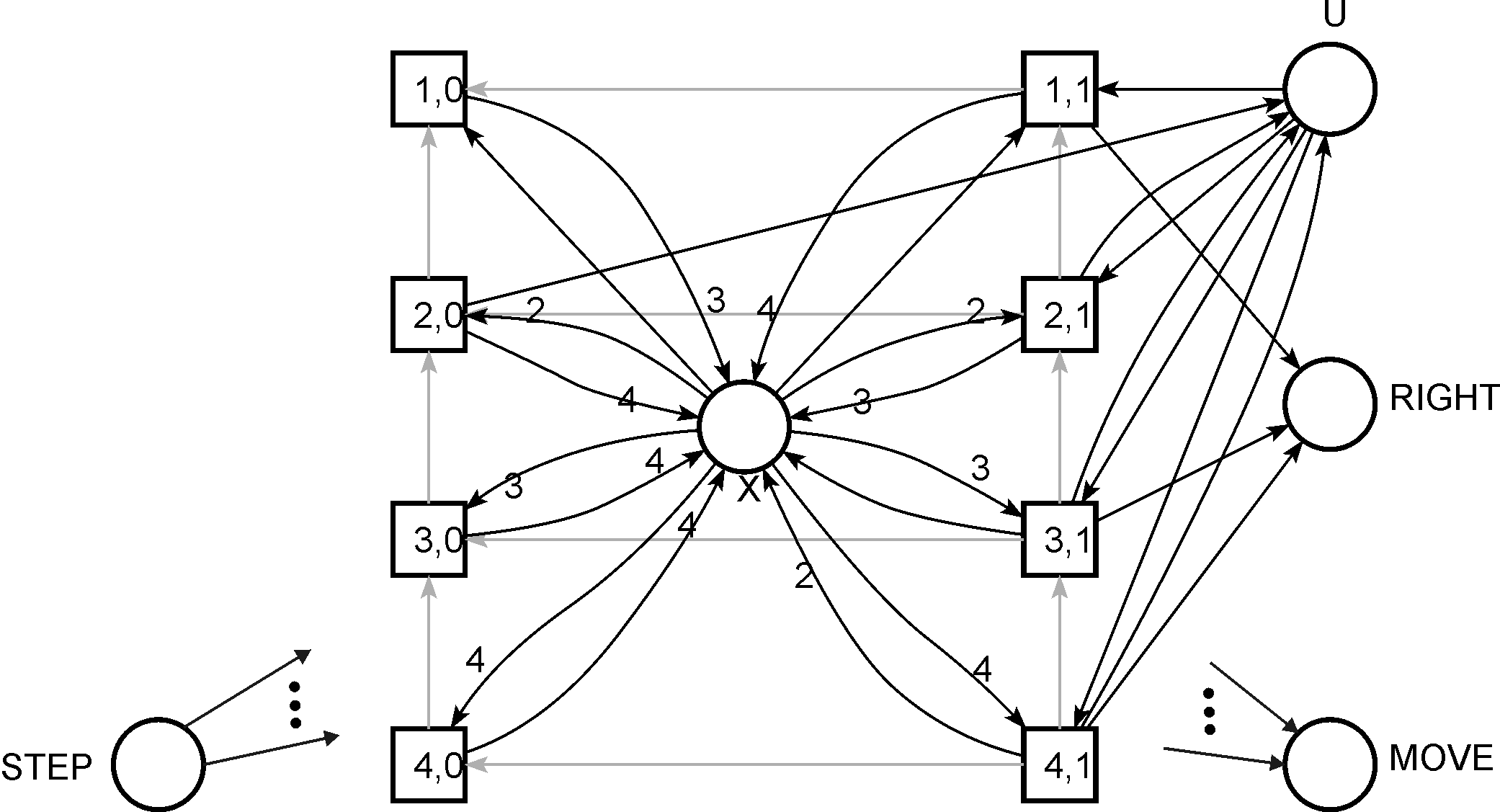}
  \caption{Subnet $FS$ simulating WUTM(2,4) transition function}
  \label{fig:fs}
\end{figure}

UPN(14,29) is composed according to Fig.~\ref{fig:upn1429} via inserting subnets and merging places with the same names. Transitions are enumerated to provide required relations of priorities; places are enumerated in an arbitrary order; the number of arcs is 138. The obtained UPN(14,29) is shown in Fig.~\ref{fig:upn1429full} and Table~\ref{tab:upn1429}; either of them specifies the net but graphical representation is rather tangled. Note that, arcs, connecting transitions, are redundant and could be omitted. Table~\ref{tab:upn1429} entry of form ${x,y}$ specifies the multiplicity of transition's incoming and outgoing arcs correspondingly for each transition-place pair. Zero multiplicity means absence of arcs; inhibitor arcs are represented by the value ${-1}$; empty entries mean two zeroes (no arcs). 

\begin{figure}
  \centering
    \includegraphics [width=0.8\textwidth] {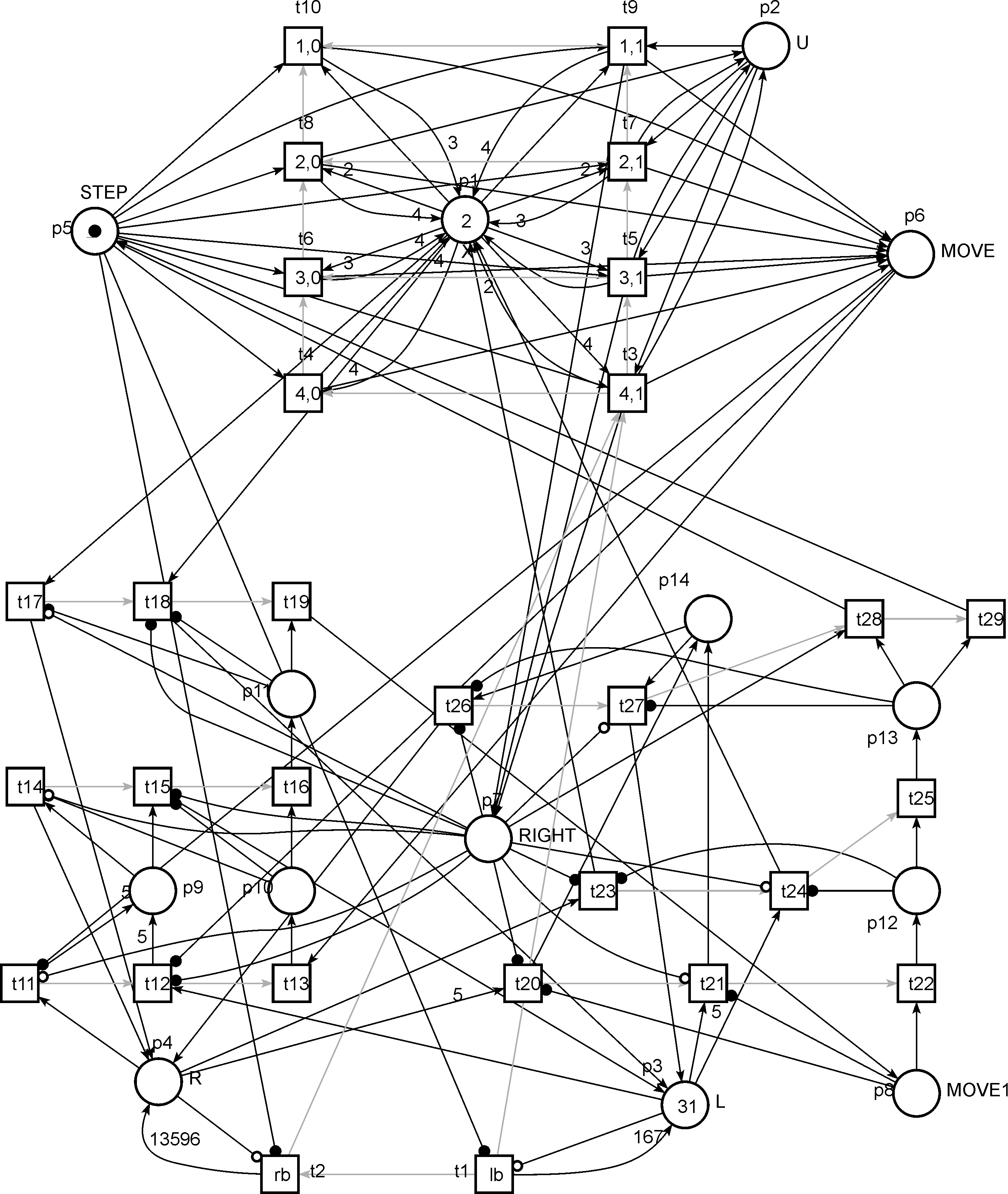}
  \caption{UPN(14,29) in graphical form}
  \label{fig:upn1429full}
\end{figure}

The movement of the tape head on the tape is simulated by the two connected subnets $MA5LR$ and $MD5LR$ inserted in the bottom part of Fig.~\ref{fig:upn1429full}. The meaning of subnets' names is the following: $MA5$ multiplication and addition with radix 5 (${S:=S\cdot 5+X}$), $MD5$ modulo and division with radix 5 (${S:=S~div~5}$, ${X:=S~mod~5 }$); $LR$ choice of places either $L$ or $R$, where codes of the left and right parts of the tape, regarding the current cell symbol code $X$, are stored, depending on the marking of place $RIGHT$. In other words, subnet $MA5$ is used to simulate adding a symbol $x$ to either the left or right tape sequence, while subnet $MD5$ simulates removing a symbol $x$ from either the left or right tape sequence. $MA5$ and $MD5$ are represented in Fig.~\ref{fig:ma5md5}. 

Thus, the sequence of subnets $MA5LR$, $MD5LR$ implements the following operations: 
\begin{itemize}
\item to simulate a left move(when place ${RIGHT=0}$): ${R:=R\cdot 5+X}$, ${L:=L~{div}~5}$, ${X:=L~{mod}~5}$;
\item to simulate a right move(when place ${RIGHT=1}$): ${L:=L\cdot 5+X}$, ${R:=R~{div}~5}$, ${X:=R~{mod}~5}$.
\end{itemize}

\begin{figure}
  \centering
    \includegraphics [width=0.6\textwidth] {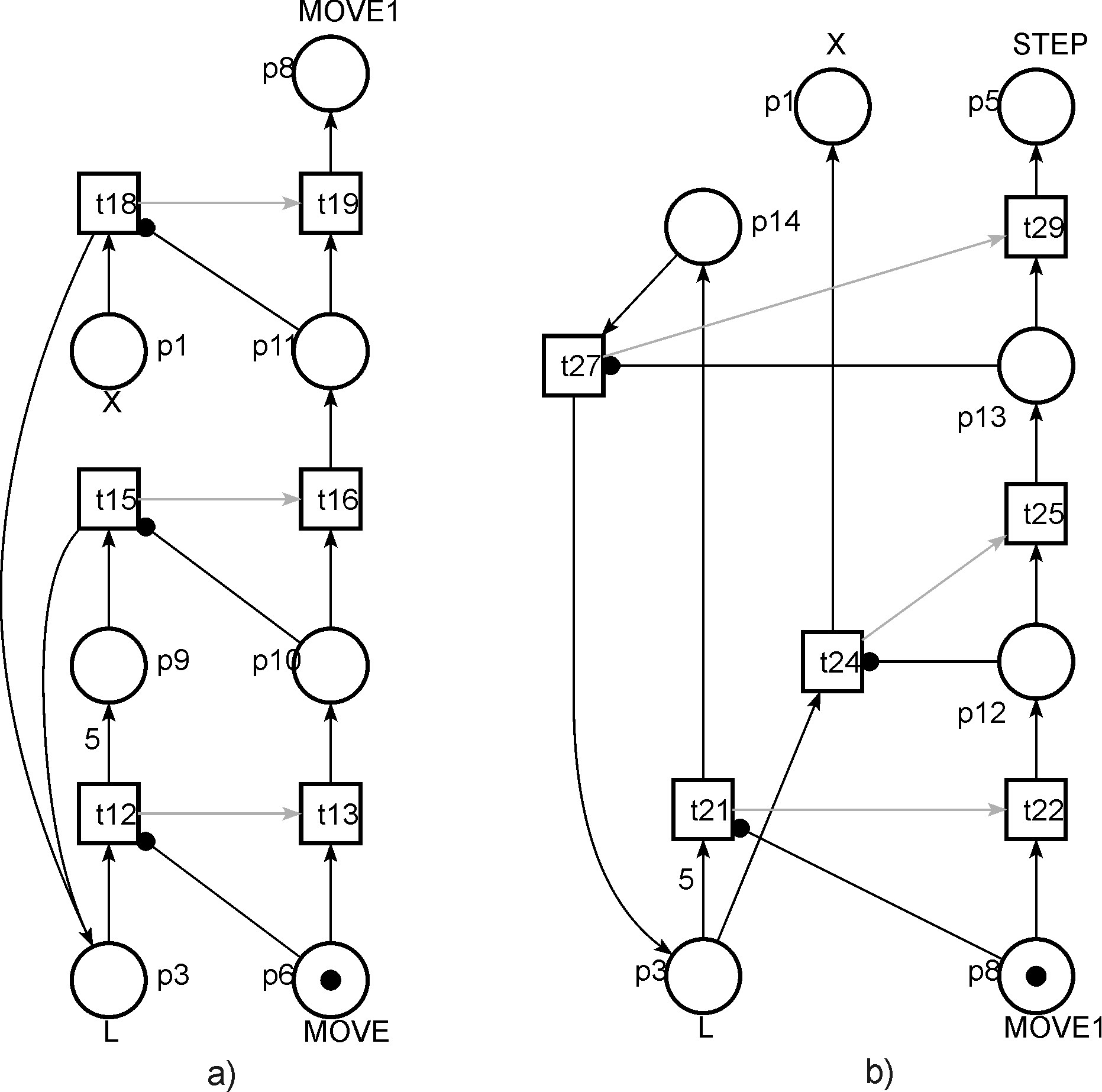}
  \caption{Basic subnets of the tape encoding end decoding: a) add a symbol to the code $MA5$; b) extract a symbol from the code $MD5$}
  \label{fig:ma5md5}
\end{figure}

In Fig.~\ref{fig:ma5md5}, the place and transition numbers were preserved in accordance with Fig.~\ref{fig:upn1429full}. For instance, subnet $MA5$ calculates ${{34}=6\cdot 5+4}$ via the following sequence of transitions' firing \break ${(t_{12})^{6}t_{13}(t_{15})^{30}t_{16}(t_{18})^{4}t_{19}}$ moving the subnet from its initial marking ${L=6,X=4,MOVE=1}$ to the final marking ${L=34,X=0,MOVE1=1}$. Subnet $MD5$ calculates ${6=34~{div}~5}$, ${4=34~{mod}~5}$ via the following sequence of transitions' firing ${(t_{21})^{6}t_{22}(t_{24})^{4}t_{25}(t_{27})^{6}t_{29}}$ moving the subnet from its initial marking ${L=34,X=0,MOVE1=1}$ to the final marking ${L=6,X=4,STEP=1}$.

When constructing $MA5LR$ on $MA5$, transitions ${t_{12},t_{15},t_{18}}$ work with the code $L$ while their twins ${t_{11},t_{14},t_{17}}$ correspondingly work with the code $R$; the choice in controlled by place $RIGHT$. Similarly, when constructing $MD5LR$ on $MD5$, transitions ${t_{21},t_{24},t_{27}}$ work with the code $L$ while their twins ${t_{20},t_{23},t_{26}}$ correspondingly work with the code $R$; the choice is controlled by place $RIGHT$; moreover, transition ${t_{28}}$ (which is a twin of ${t_{29}}$) finally cleans place ${RIGHT}$ in case it contained a token. 
\begin{lemma}
Sequence of subnets $MA5LR$, $MD5LR$, supplied with transitions $lb$, $rb$ having higher priority (Fig.~\ref{fig:upn1429}) and encoding tape symbols starting from unit, simulates work with weakly universal TM tape.
\label{lem:tmtape}
\end{lemma}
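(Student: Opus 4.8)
The plan is to establish a simulation invariant tying the three codes $(L,X,R)$ to the WUTM tape contents, prove that the two subnets realize the intended stack arithmetic, and then show that the higher-priority transitions $lb$, $rb$ reconstruct the infinite blank words digit by digit. Concretely, I would fix the invariant: $X=s(c)$ for the current cell symbol $c$, while $L=\sum_{i\ge 1}s(a_i)\,5^{\,i-1}$ and $R=\sum_{i\ge 1}s(b_i)\,5^{\,i-1}$ encode the finite left and right portions of the working zone with the cell-adjacent symbol as the least-significant digit (the orientation fixed by equation~(1)). Because symbols are encoded starting from unit, every nonempty finite portion has a strictly positive code, so $L=0$ (resp.\ $R=0$) holds \emph{exactly} when the left (resp.\ right) working zone has been consumed and the head is about to enter the blank region.

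First I would verify the arithmetic of the basic blocks. From the firing counts illustrated in Fig.~\ref{fig:ma5md5}, subnet $MA5$ performs $S:=S\cdot 5+X$ while clearing $X$, and $MD5$ performs $X:=S\bmod 5$, $S:=S~\mathrm{div}~5$; by equation~(1) these are exactly the push and pop of one radix-$5$ digit. I would then check the $LR$ routing: place $RIGHT$ selects between each transition and its twin, so a left move ($RIGHT=0$) executes $R:=R\cdot 5+X$ followed by $X:=L\bmod 5$, $L:=L~\mathrm{div}~5$, and a right move ($RIGHT=1$) the mirror image. This matches the head-move semantics, since the old current symbol is pushed onto the trailing side and the new current symbol is popped from the leading side.

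The crux is the blank-word reconstruction. I would argue that $lb$ is enabled precisely when $L=0$ and, having higher priority than the $MD5LR$ pop transitions, fires first and sets $L:=sw_l=167$; symmetrically $rb$ refills $R:=sw_r=13596$. It then remains to confirm that these codes decode in the correct \emph{outward} order: successive $\bmod 5$ extractions from $167$ yield the symbols $1,\tnMarkZero,0,0$, i.e.\ $w_l=00\tnMarkZero 1$ read from the cell outward, and likewise $13596$ unwinds into $w_r$ read rightward outward; this is forced by the stated convention that $w_l$ was encoded left-to-right and $w_r$ right-to-left. Once a refilled code is exhausted the side returns to $0$ and the same transition fires again, so the periodic blank tape is regenerated indefinitely. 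Since a single move pops from only one side, $L$ and $R$ cannot both require refilling at the same step (as already noted in the text), so $lb$ and $rb$ never compete and their relative priority is immaterial.

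Finally I would close the argument by induction on the number of simulated head moves: the steps above show that one pass $MA5LR\to MD5LR$, with $lb$/$rb$ interleaved by priority, transforms $(L,X,R)$ according to the move while preserving the invariant, including at the edges. Hence the arrangement faithfully simulates the WUTM tape for arbitrarily long computations and arbitrarily deep excursions into the blank regions. I expect the genuine obstacle to be the third step: one must simultaneously pin down the firing \emph{timing} (the refill must precede the otherwise-invalid extraction of a zero ``symbol'') and the numerical \emph{value} (that $sw_l$, $sw_r$ unwind under repeated division into exactly the blank words in outward order and splice seamlessly across period boundaries), which is where the orientation conventions and the from-unit encoding must be reconciled carefully.
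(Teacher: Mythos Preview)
Your overall plan---fix an encoding invariant, verify the push/pop arithmetic of $MA5$/$MD5$, route via $RIGHT$, and handle the blank words by the refill transitions---is sound and considerably more explicit than the paper's three-line argument. The paper's proof simply observes that after the pair $MA5LR$, $MD5LR$ only the side that was divided can have become zero, and that at the next $STEP$ the corresponding refill transition fires ahead of $FS$; everything else is left implicit.

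There is, however, a concrete timing error in your third step. You assert that $lb$ ``having higher priority than the $MD5LR$ pop transitions, fires first'' and speak of ``$lb$/$rb$ interleaved by priority'' within the pass $MA5LR\to MD5LR$, so that ``the refill must precede the otherwise-invalid extraction of a zero symbol''. That is not how the net is wired. From Table~\ref{tab:upn1429}, transitions $t_1=lb$ and $t_2=rb$ are enabled only by place $STEP$ (they loop on $p_5$); they are \emph{not} enabled during $MD5LR$, which is driven by $MOVE1$ and the auxiliary places $p_{12},p_{13},p_{14}$. Consequently no refill can interleave with the pop. What actually happens is: the pop from the leading side proceeds normally and yields the last stored symbol in $X$ while driving that side's code to $0$; the cycle then returns a token to $STEP$; and only at the start of the \emph{next} simulated step do $lb$/$rb$ (being $t_1,t_2$) outrank the $FS$ transitions $t_3$--$t_{10}$ and deposit $sw_l$ or $sw_r$ before $FS$ reads $(X,U)$. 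There is never an ``invalid extraction of a zero symbol'' to be preempted: with symbols encoded from unit, the division that produces quotient $0$ still leaves a valid nonzero remainder in $X$. Your invariant and your check that $sw_l$, $sw_r$ unwind in the correct outward order remain useful, but the induction step must be phrased as ``complete $MA5LR$, $MD5LR$; then, back at $STEP$, fire $lb$ or $rb$ if the corresponding code is zero; then run $FS$'', exactly as the paper states. Once you relocate the refill to the $STEP$ phase, the rest of your argument goes through.
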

\begin{proof}
After the sequence $MA5LR$, $MD5LR$, only one of places $L$, $R$ can become zero: $L$ -- for the left move and $R$ -- for the right move as a result of division operation that means hitting the corresponding edge of the tape working zone. Then place $STEP$ enables one of transition $lb$, $rb$ which fires before $FS$ work and inserts the corresponding blank word code disabling the fired transition. Then, as $lb$, $rb$ are disabled, subnet $FS$ starts. 
\end{proof}

As far as the check on zero is implemented before the current step, blank words are not encoded in the initial configuration.

\begin{theorem}
UPN(14,29) simulates WUTM(2,4) in time ${O(k\cdot5^{{k}})}$ and space ${O(k)}$, where ${k}$ is the number of WUTM(2,4) steps.
\label{the:upnwutm}
\end{theorem}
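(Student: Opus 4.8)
The plan is to charge the cost of simulating a single WUTM(2,4) step and then sum over the $k$ steps; both estimates reduce to bounding the magnitudes of the two tape codes $L$ and $R$. By Lemma~\ref{lem:tmtape} one simulated step consists of a single firing in $FS$, one pass through $MA5LR$, one pass through $MD5LR$, and at most one firing of $lb$ or $rb$, so it suffices to charge each part and to show that every code stays $O(5^k)$.

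The crux is the estimate $L,R=O(5^k)$, for which I would bound the length of the materialized tape zone by $k+O(1)$. The head advances by at most one cell per step, so across $k$ steps it makes first-time visits to at most $k$ cells. Each firing of $lb$ or $rb$ materializes a \emph{constant} number of further cells ($|w_l|=4$, respectively $|w_r|=6$), but the decisive point is that before $lb$ can fire again the code $L$ must return to $0$, which forces the head to step onto every previously inserted left cell (and symmetrically for $rb$ and $R$). Hence each materialized cell is charged to a distinct first-time head-visit, so the zone has length at most $k+O(1)$. By the radix-$5$ encoding~(1) this gives $L,R\le 5^{k+O(1)}=O(5^k)$, and every intermediate marking arising inside $MA5$, $MD5$ -- the largest being the value $5S$ formed during multiplication -- is likewise $O(5^k)$.

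Both bounds now follow. Space, measured as the bit-length of the marking (equivalently, the working-zone length up to the factor $\log_2 5$), is $O(k)$: the maximum marking is $O(5^k)$, while all other places ($U$, $X$, $STEP$, $MOVE$, $RIGHT$, and the auxiliary counters) hold $O(1)$ tokens. For time, counted as the number of transition firings, the $FS$ transition together with the framing and blank transitions $t_{13},t_{16},t_{19},t_{22},t_{25},t_{29},lb,rb$ fire $O(1)$ times per step; the dominant cost is arithmetic. Following the worked examples, $MA5LR$ computing $S:=S\cdot5+X$ runs through $(t_{12})^{S}t_{13}(t_{15})^{5S}t_{16}(t_{18})^{X}t_{19}$, a total of $6S+X+3=\Theta(S)$ firings, and $MD5LR$ computing $S:=S\operatorname{div}5$, $X:=S\bmod 5$ uses $\Theta(S)$ firings as well. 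Since $S=O(5^k)$ throughout, each pass costs $O(5^k)$, one step costs $O(5^k)$, and summation over the $k$ steps yields $O(k\cdot5^k)$.

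I expect the sharp zone estimate, rather than the arithmetic, to be the main obstacle. A naive argument would observe only that the $\Omega(1)$ steps separating successive insertions force at most $O(k)$ insertions of constant size, yielding a zone of length $O(k)$ and hence codes of size $5^{O(k)}$ -- too weak, since it leaves the base of the exponential above $5$. Fixing the base at exactly $5$ needs the amortization above, namely that each inserted blank word is \emph{entirely} consumed before its side can trigger another insertion, so that the zone grows by at most one cell per step. The remaining work is routine: checking, by a token-flow invariant under the minimal-index firing rule, that the stated $MA5$ and $MD5$ sequences are forced and indeed realize $S\cdot5+X$ and the pair $(S\operatorname{div}5,\,S\bmod 5)$, after which both bounds follow by the summation above.
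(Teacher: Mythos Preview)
Your plan is correct and follows the same underlying strategy as the paper: bound the working-zone length, infer $L,R=O(5^{k})$ from the radix-$5$ encoding, count the $\Theta(S)$ firings in each $MA5$/$MD5$ pass, and sum over $k$ steps. The paper, however, does not spell this out; it simply declares the theorem ``an immediate conclusion of Lemma~\ref{lem:tmtape} and the analogous theorem proven in \cite{ZV13}'' and adds a one-line worst-case remark (``on each step, TM goes to the left (right) and after insertion of a blank word, the working zone width is $k+4$ ($k+6$)'').

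Where your write-up genuinely adds something is the zone-length argument. The paper only asserts that the monotone all-left (all-right) run is the worst case; you instead give an amortization that actually proves the $k+O(1)$ bound for arbitrary runs: between two firings of $lb$ the code $L$ must drop from four base-$5$ digits to zero, which forces the head to make (at least) four left moves in that interval, so every cell materialized by $lb$ (except possibly the last batch) is charged to a distinct left move; symmetrically for $rb$ and right moves. Since left and right moves partition the $k$ steps, the total number of cells added by $lb$ and $rb$ is at most $k+O(1)$, giving zone length $n_{0}+k+O(1)$ and hence $L,R=O(5^{k})$. This is exactly the sharpness you flag as the main obstacle, and your treatment is tighter than the paper's hand-wave. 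The remaining pieces---the $\Theta(S)$ firing count for $MA5$/$MD5$ (your $6S+X+3$ matches the worked example in the paper), the $O(1)$ cost of $FS$, $lb$, $rb$ and the framing transitions, and the $O(k)$ bit-length for space---are routine and agree with what the reference \cite{ZV13} would supply.
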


Theorem~\ref{the:upnwutm} is an immediate conclusion of lemma~\ref{lem:tmtape} and analogous theorem proven in \cite{ZV13} for \break UPN(14,42). The radix for tape encoding is 5, so, in the worst case, on each step, TM goes to the left (right) and after insertion of a blank word, the working zone width is ${k+4}$ (${k+6}$). In the worst case, insertion of a blank word to the left (right) requires an extra transition firing on each of ${k/4}$ (${k/6}$) steps. Considered peculiarities of UPN(14,29) bring constant summands and multipliers (which were omitted) to the estimations.

\begin{table}
  \begin{center} 
\begin{scriptsize}
    \begin{tabular}{|c|c|c|c|c|c|c|c|c|c|c|c|c|c|c|c|}
      \hline
      Sub & $T \backslash P$ & $p_1$ & $p_2$ & $p_3$ & $p_4$ & $p_5$ & $p_6$ & $p_7$ & $p_8$ & $p_9$ & $p_{10}$ & $p_{11}$ & $p_{12}$ & $p_{13}$ & $p_{14}$ \\
      net &  & $X$ & $U$ & $L$ & $R$ & $STEP$ & $MOVE$ & $RIGHT$ & $MOVE1$ &  &  &  &  &  &  \\
     \hline
      $lb$ & $t_1$ &  &  & $-1,$  &  & $1,1$ &  &  &  &  &  &  &  &  &  \\ 
           &       &  &  & $167$ &  &       &  &  &  &  &  &  &  &  &  \\ 
     \hline
      $rb$ & $t_2$ &  &  &  & $-1,$   & $1,1$ &  &  &  &  &  &  &  &  &  \\
           &       &  &  &  & $13596$ &       &  &  &  &  &  &  &  &  &  \\  
     \hline
           & $t_3$ & $4,2$ & $1,1$ &  &  & $1,0$ & $0,1$ & $0,1$ &  &  &  &  &  &  &  \\
     \cline{2-16}
           & $t_4$ & $4,4$ &  &  &  & $1,0$ & $0,1$ &  &  &  &  &  &  &  &  \\
     \cline{2-16}
           & $t_5$ & $3,1$ & $1,1$ &  &  & $1,0$ & $0,1$ & $0,1$ &  &  &  &  &  &  &  \\
     \cline{2-16}
       $F$ & $t_6$ & $3,4$ &  &  &  & $1,0$ & $0,1$ &  &  &  &  &  &  &  &  \\
     \cline{2-16}
       $S$ & $t_7$ & $2,3$ & $1,1$ &  &  & $1,0$ & $0,1$ &  &  &  &  &  &  &  &  \\
     \cline{2-16}
           & $t_8$ & $2,4$ & $0,1$ &  &  & $1,0$ & $0,1$ &  &  &  &  &  &  &  &  \\
     \cline{2-16}
           & $t_9$ & $1,4$ & $1,0$ &  &  & $1,0$ & $0,1$ & $0,1$ &  &  &  &  &  &  &  \\
     \cline{2-16}
           & $t_{10}$ & $1,3$ &  &  &  & $1,0$ & $0,1$ &  &  &  &  &  &  &  &  \\
     \hline
           & $t_{11}$ &  &  &  & $1,0$ &  & $1,1$ & $-1,0$ &  & $0,5$ &  &  &  &  &  \\
     \cline{2-16}
           & $t_{12}$ &  &  & $1,0$ &  &  & $1,1$ & $1,1$ &  & $0,5$ &  &  &  &  &  \\
     \cline{2-16}
       $M$ & $t_{13}$ &  &  &  &  &  & $1,0$ &  &  &  & $0,1$ &  &  &  &  \\
     \cline{2-16}
       $A$ & $t_{14}$ &  &  &  & $0,1$ &  &  & $-1,0$ &  & $1,0$ & $1,1$ &  &  &  &  \\
     \cline{2-16}
       $5$ & $t_{15}$ &  &  & $0,1$ &  &  &  & $1,1$ &  & $1,0$ & $1,1$ &  &  &  &  \\
     \cline{2-16}
       $L$ & $t_{16}$ &  &  &  &  &  &  &  &  &  & $1,0$ & $0,1$ &  &  &  \\
     \cline{2-16}
       $R$ & $t_{17}$ & $1,0$ &  &  & $0,1$ &  &  & $-1,0$ &  &  &  & $1,1$ &  &  &  \\
     \cline{2-16}
           & $t_{18}$ & $1,0$ &  & $0,1$ &  &  &  & $1,1$ &  &  &  & $1,1$ &  &  &  \\
     \cline{2-16} 
           & $t_{19}$ &  &  &  &  &  &  &  & $0,1$ &  &  & $1,0$ &  &  &  \\
     \hline
           & $t_{20}$ &  &  &  & $5,0$ &  &  & $1,1$ & $1,1$ &  &  &  &  &  & $0,1$ \\
     \cline{2-16}
           & $t_{21}$ &  &  & $5,0$ &  &  &  & $-1,0$ & $1,1$ &  &  &  &  &  & $0,1$ \\
     \cline{2-16}
       $M$ & $t_{22}$ &  &  &  &  &  &  &  & $1,0$ &  &  &  & $0,1$ &  &  \\
     \cline{2-16}
       $D$ & $t_{23}$ & $0,1$ &  &  & $1,0$ &  &  & $1,1$ &  &  &  &  & $1,1$ &  &  \\
     \cline{2-16}
       $5$ & $t_{24}$ & $0,1$ &  & $1,0$ &  &  &  & $-1,0$ &  &  &  &  & $1,1$ &  &  \\
     \cline{2-16}
       $L$ & $t_{25}$ &  &  &  &  &  &  &  &  &  &  &  & $1,0$ & $0,1$ &  \\
     \cline{2-16}
       $R$ & $t_{26}$ &  &  &  & $0,1$ &  &  & $1,1$ &  &  &  &  &  & $1,1$ & $1,0$ \\
     \cline{2-16}
           & $t_{27}$ &  &  & $0,1$ &  &  &  & $-1,0$ &  &  &  &  &  & $1,1$ & $1,0$ \\
     \cline{2-16}
           & $t_{28}$ &  &  &  &  & $0,1$ &  & $1,0$ &  &  &  &  &  & $1,0$ &  \\
     \cline{2-16}
           & $t_{29}$ &  &  &  &  & $0,1$ &  &  &  &  &  &  &  & $1,0$ &  \\
     \hline
    \end{tabular}
\end{scriptsize}
  \end{center}
  \caption{UPN(14,29) in tabular form}
  \label{tab:upn1429}
\end{table}

An example of UPN(14,29) work, on the sequence of WUTM(2,4) configurations described in \cite{NE08,NW09W}, is represented in Table~\ref{tab:upn1429tr}. The codes of state and tape are shown for markings when place STEP contains a token (before running current TM step simulation). Initial tape code corresponds to the word ${{000}}$ on the left and the current cell symbol ${1}$. Before executing the step simulation, transition $rb$ fires and puts the code of the right blank word ${{sw}_{r}={13596}}$ into place ${R}$. A similar situation arises after step 5 and the code of the left blank word ${{sw}_{l}={167}}$ is put by transition $lb$ into place $L$. The part of the encoded TM configuration (the tape working zone) is highlighted in a bold font. The decoded result on each step completely corresponds with configurations considered in \cite{NE08,NW09W}. 

\begin{table}
  \begin{center}
    \begin{tabular}{|c|c|c|c|}
      \hline  
           &               & \multicolumn{2}{c|}{Code of} \\
      \cline{3-4}
      Step & Configuration & state & tape \\
           &  & $U$ & $(L,X,R)$ \\
      \hline
       0 & $u_1,...00{\tnMarkZero}1~00{\tnMarkZero}1~\textbf{000\underline{1}}~0{\tnMarkOne}{\tnMarkZero}{\tnMarkZero}0{\tnMarkOne}~ 0{\tnMarkOne}{\tnMarkZero}{\tnMarkZero}0{\tnMarkOne} ...$ & 0 & (31,2,0) \\
      \hline
         & $u_1,...00{\tnMarkZero}1~00{\tnMarkZero}1~\textbf{000\underline{1}~0{\tnMarkOneB}{\tnMarkZeroB}{\tnMarkZeroB}0{\tnMarkOneB}}~0{\tnMarkOne}{\tnMarkZero}{\tnMarkZero}0{\tnMarkOne} ...$ & 0 & (31,2,13596) \\
      \hline
       1 & $u_2,...00{\tnMarkZero}1~00{\tnMarkZero}1~\textbf{00\underline{0}{\tnMarkOneB}~0{\tnMarkOneB}{\tnMarkZeroB}{\tnMarkZeroB}0{\tnMarkOneB}}~0{\tnMarkOne}{\tnMarkZero}{\tnMarkZero}0{\tnMarkOne} ...$ & 1 & (6,1,67984) \\
      \hline
       2 & $u_1,...00{\tnMarkZero}1~00{\tnMarkZero}1~\textbf{00{\tnMarkOneB}\underline{\tnMarkOneB}~0{\tnMarkOneB}{\tnMarkZeroB}{\tnMarkZeroB}0{\tnMarkOneB}}~0{\tnMarkOne}{\tnMarkZero}{\tnMarkZero}0{\tnMarkOne} ...$ & 0 & (34,4,13596)\\
      \hline
       3 & $u_1,...00{\tnMarkZero}1~00{\tnMarkZero}1~\textbf{00\underline{\tnMarkOneB}{\tnMarkOneB}~0{\tnMarkOneB}{\tnMarkZeroB}{\tnMarkZeroB}0{\tnMarkOneB}}~0{\tnMarkOne}{\tnMarkZero}{\tnMarkZero}0{\tnMarkOne} ...$ & 0 & (6,4,67984) \\
      \hline
       4 & $u_1,...00{\tnMarkZero}1~00{\tnMarkZero}1~\textbf{0\underline{0}{\tnMarkOneB}{\tnMarkOneB}~0{\tnMarkOneB}{\tnMarkZeroB}{\tnMarkZeroB}0{\tnMarkOneB}}~0{\tnMarkOne}{\tnMarkZero}{\tnMarkZero}0{\tnMarkOne} ...$ & 0 & (1,1,339924) \\
      \hline 
       5 & $u_1,...00{\tnMarkZero}1~00{\tnMarkZero}1~\textbf{\underline{0}{\tnMarkZeroB}{\tnMarkOneB}{\tnMarkOneB}~0{\tnMarkOneB}{\tnMarkZeroB}{\tnMarkZeroB}0{\tnMarkOneB}}~0{\tnMarkOne}{\tnMarkZero}{\tnMarkZero}0{\tnMarkOne} ...$ & 0 & (0,1,1699623) \\
      \hline
         & $u_1,...00{\tnMarkZero}1~\textbf{00{\tnMarkZeroB}1~\underline{0}{\tnMarkZeroB}{\tnMarkOneB}{\tnMarkOneB}~0{\tnMarkOneB}{\tnMarkZeroB}{\tnMarkZeroB}0{\tnMarkOneB}}~0{\tnMarkOne}{\tnMarkZero}{\tnMarkZero}0{\tnMarkOne} ...$ & 0 & (167,1,1699623) \\
      \hline
       6 & $u_1,...00{\tnMarkZero}1~\textbf{00{\tnMarkZeroB}\underline{1}~{\tnMarkZeroB}{\tnMarkZeroB}{\tnMarkOneB}{\tnMarkOneB}~0{\tnMarkOneB}{\tnMarkZeroB}{\tnMarkZeroB}0{\tnMarkOneB}}~0{\tnMarkOne}{\tnMarkZero}{\tnMarkZero}0{\tnMarkOne} ...$ & 0 & (33,2,8498118) \\
      \hline
       7 & $u_2,...00{\tnMarkZero}1~\textbf{00\underline{\tnMarkZeroB}{\tnMarkOneB}~{\tnMarkZeroB}{\tnMarkZeroB}{\tnMarkOneB}{\tnMarkOneB}~0{\tnMarkOneB}{\tnMarkZeroB}{\tnMarkZeroB}0{\tnMarkOneB}}~0{\tnMarkOne}{\tnMarkZero}{\tnMarkZero}0{\tnMarkOne} ...$ & 1 & (6,3,42490594) \\
      \hline
       8 & $u_2,...00{\tnMarkZero}1~\textbf{000\underline{\tnMarkOneB}~{\tnMarkZeroB}{\tnMarkZeroB}{\tnMarkOneB}{\tnMarkOneB}~0{\tnMarkOneB}{\tnMarkZeroB}{\tnMarkZeroB}0{\tnMarkOneB}}~0{\tnMarkOne}{\tnMarkZero}{\tnMarkZero}0{\tnMarkOne} ...$ & 1 & (31,4,8498118) \\
      \hline
       9 & $u_2,...00{\tnMarkZero}1~\textbf{0001~\underline{\tnMarkZeroB}{\tnMarkZeroB}{\tnMarkOneB}{\tnMarkOneB}~0{\tnMarkOneB}{\tnMarkZeroB}{\tnMarkZeroB}0{\tnMarkOneB}}~0{\tnMarkOne}{\tnMarkZero}{\tnMarkZero}0{\tnMarkOne} ...$ & 1 & (157,3,1699623) \\
      \hline
       10 & $u_2,...00{\tnMarkZero}1~\textbf{0001~0\underline{\tnMarkZeroB}{\tnMarkOneB}{\tnMarkOneB}~0{\tnMarkOneB}{\tnMarkZeroB}{\tnMarkZeroB}0{\tnMarkOneB}}~0{\tnMarkOne}{\tnMarkZero}{\tnMarkZero}0{\tnMarkOne} ...$ & 1 & (786,3,339924) \\
      \hline
       11 & $u_2,...00{\tnMarkZero}1~\textbf{0001~00\underline{\tnMarkOneB}{\tnMarkOneB}~0{\tnMarkOneB}{\tnMarkZeroB}{\tnMarkZeroB}0{\tnMarkOneB}}~0{\tnMarkOne}{\tnMarkZero}{\tnMarkZero}0{\tnMarkOne} ...$ & 1 & (3931,4,67984) \\
      \hline
       12 & $u_2,...00{\tnMarkZero}1~\textbf{0001~001\underline{\tnMarkOneB}~0{\tnMarkOneB}{\tnMarkZeroB}{\tnMarkZeroB}0{\tnMarkOneB}}~0{\tnMarkOne}{\tnMarkZero}{\tnMarkZero}0{\tnMarkOne} ...$ & 1 & (19657,4,13596) \\
      \hline
       13 & $u_2,...00{\tnMarkZero}1~\textbf{0001~0011~\underline{0}{\tnMarkOneB}{\tnMarkZeroB}{\tnMarkZeroB}0{\tnMarkOneB}}~0{\tnMarkOne}{\tnMarkZero}{\tnMarkZero}0{\tnMarkOne} ...$ & 1 & (98287,1,2719) \\
      \hline
       14 & $u_1,...00{\tnMarkZero}1~\textbf{0001~0011~{\tnMarkOneB}\underline{\tnMarkOneB}{\tnMarkZeroB}{\tnMarkZeroB}0{\tnMarkOneB}}~0{\tnMarkOne}{\tnMarkZero}{\tnMarkZero}0{\tnMarkOne} ...$ & 0 & (491439,4,543) \\
      \hline
       ... & ... & ... & ... \\
      \hline
    \end{tabular}
  \end{center}
  \caption{Trace of UPN(14,29) running}
  \label{tab:upn1429tr}
\end{table}

\section{Simulating BTS by TM}

In \cite{NW09FI} BTS was constructed, as an intermediate system, on TM via cyclic TM. Here we solve an inverse task that is rather simple but required to supply a missed link in the chosen chain of DIPN encoding described in section 3.

Let BTS ${B=(A,E,e_{{h}},R)}$ be given. We construct TM ${{MB}=(\Omega,\Sigma,f,{qs},{qh})}$, where states are denoted with letters $"q"$ to distinguish $MB$ from WUTM(2,4) and

{\centering
${\Omega={qs}\cup {qh}\cup {ql}\cup \{q_{{a}}|a\in A\}\cup \{q_{{e}}|e\in E\}\cup}$

${\cup \{q_{{e,a}}|e\in E,a\in A\}\cup \{\mathit{q1}_{{e,a}}|e\in E,a\in A\}\cup \{\mathit{q2}_{{e,a}}|e\in E,a\in
A\}}$,

${\Sigma =\lambda \cup A\cup E}$.
\par}

The BTS initial configuration is written directly on the $MB$ tape with the control head in state $qs$ on its leftmost symbol. Apart from the start $qs$ and halt $qh$ states, $MB$ contains:
\begin{itemize}
\item a state $ql$ for the left move to the BTS configuration head after insertion into its tail;
\item a state ${q_{{a}}}$ for each symbol ${a\in A}$ to store the first symbol ${a\in A}$ of the BTS configuration after its deletion to bring it to the tail according to productions of form ${R(a)=a}$;
\item a state ${q_{{e}}}$ for each symbol ${e\in E}$ to store the first symbol ${e\in E}$ of the BTS configuration after its deletion to process the following symbol ${a\in A}$ according to productions of forms ${R(e,a)=AE}$ and ${R(e,a)=AAE}$;
\item a state ${q_{{e,a}}}$ for each pair of symbols ${e\in E}$, ${a\in A}$ to store the current production after deletion of symbols' pair ${ea}$ from the beginning of the BTS configuration;
\item states ${\mathit{q1}_{{e,a}}}$, ${\mathit{q2}_{{e,a}}}$ for each pair of symbols ${e\in E}$, ${a\in A}$ for insertion of the second and third symbols of the right part of productions of forms ${R(e,a)=AE}$ and ${R(e,a)=AAE}$.
\end{itemize}
The $MB$ transition function ${f}$ is described in Table~\ref{tab:mb}.

\begin{table}
  \begin{center}
    \begin{tabular}{|l|l|l|}
      \hline
      No & Instruction & Description \\
      \hline
       $1.1$ & $(qs,e_h,qh,S,e_h);$ & halt \\
      \hline
       $2.1$ & $(qs,a,q_a,R,\lambda),$ &  \\
       $2.2$ & $(q_a,x,q_a,R,x),~x\neq \lambda,$ & bring $a$ to the tail for $R(a)=a$ \\
       $2.3$ & $(q_a,\lambda,ql,L,a);$ &  \\
      \hline
       $3.1$ & $(qs,e,q_e,R,\lambda),$ &  \\
       $3.2$ & $(q_e,a,q_{e,a},R,\lambda),$ & bring $R(e,a)$ to the tail \\
       $3.3$ & $(q_{e,a},x,q_{e,a},R,x),~x\neq \lambda,$ & for $R(e,a)\in AE$ or $R(e,a)\in AAE$: \\
       $3.4$ & $(q_{e,a},\lambda,q1_{e,a},R,s1_{e,a}),$ & $s1_{e,a}=a'$, $s2_{e,a}=e'$, $s3_{e,a}=\lambda$ for $R(e,a)=a'e'$ \\
       $3.5$ & $(q1_{e,a},\lambda,q2_{e,a},R,s2_{e,a}),$ & $s1_{e,a}=a'$, $s2_{e,a}=a''$, $s3_{e,a}=e'$ for $R(e,a)=a'a''e'$ \\
       $3.6$ & $(q2_{e,a},\lambda,ql,L,s3_{e,a});$ &  \\
      \hline
       $4.1$ & $(ql,x,ql,L,x),~x\neq \lambda,$ & move to the head \\
       $4.2$ & $(ql,\lambda,qs,R,\lambda).$ &  \\
      \hline
    \end{tabular}
  \end{center}
  \caption{Instructions of Turing machine $MB$}
  \label{tab:mb}
\end{table}

\begin{theorem}
Turing machine $MB$ (Table~\ref{tab:mb}) simulates bi-tag system $B$ in time ${O(k^{{2}})}$ with space ${O(k)}$, where ${k}$ is the number of applied productions.
\label{the:mb}
\end{theorem}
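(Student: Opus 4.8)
The plan is to exhibit $MB$ as a faithful step-by-step simulator of $B$ by maintaining a simulation invariant that ties each $qs$-configuration of $MB$ to the corresponding $B$-configuration, proving by induction that every ``macro-step'' of $MB$ (a maximal run from state $qs$ to the next occurrence of $qs$) realizes exactly one $B$ computation step, and then bounding the length of each macro-step to read off the time and space costs.

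First I would state and anchor the invariant: \emph{whenever $MB$ is in state $qs$, the head scans the leftmost non-blank cell, the maximal block of non-blank cells starting there spells the current $B$-configuration $w$, and every cell to the left of that block is blank.} The base case is the prescribed initial placement (head in $qs$ on the leftmost symbol of the initial $B$-configuration, with no leading blanks).

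For the inductive step I would branch on the leftmost symbol, matching each branch against the corresponding rule of $B$:
\begin{itemize}
\item If it is $e_h$, instruction $1.1$ passes to $qh$ and stops, which is exactly the BTS halting condition.
\item If it is some $a\in A$, instructions $2.1$--$2.3$ erase the leading $a$, sweep right to the first trailing blank, write $a$ there, and enter $ql$; the return instructions $4.1$--$4.2$ sweep left to the blank freed by $2.1$ and step once right into $qs$. The block changes $as'\mapsto s'a=s'R(a)$, matching $as'\succ s'R(a)$, and the invariant is restored on the new leftmost symbol.
\item If it is some $e\in E\setminus\{e_h\}$, the configuration form $A^{*}(EA\cup AE)A^{*}$ forces the second symbol to be some $a\in A$, so $3.1$--$3.2$ erase $ea$ and, after $3.3$ sweeps to the trailing blank, $3.4$--$3.6$ deposit $s1_{e,a},s2_{e,a},s3_{e,a}$ in order. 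Here the uniform handling of length-two and length-three right parts is the crux: setting $s3_{e,a}=\lambda$ when $|R(e,a)|=2$ lets the same three instructions append either $a'e'$ or $a'a''e'$. The return phase again re-establishes $qs$, and the block becomes $s'R(e,a)$, matching $eas'\succ s'R(e,a)$.
\end{itemize}
The routine content is position bookkeeping during the sweeps; the one point that needs care is verifying that the leftward sweep in $ql$ halts on a deletion blank and not earlier, which holds because a $B$-configuration is a contiguous non-blank word, so the only interior blanks the sweep can reach are exactly those vacated in the current macro-step.

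For the complexity I would track, over the $k$ applied productions, the length $\ell_i$ of the non-blank block and the accumulated leading blank. Since each step deletes one or two symbols and appends one, two, or three (net change at most $+1$), $\ell_i\le \ell_0+i=O(k)$; the leading blanks grow by at most two per step and are never reclaimed, so the rightmost visited cell stays $O(k)$, giving space $O(k)$. A macro-step consists of one right sweep to the tail and one left sweep back, of combined length proportional to the current block width $O(i)$; summing $\sum_{i=1}^{k}O(i)=O(k^{2})$ yields the time bound. The main obstacle is thus not any individual instruction but the space accounting: one must observe that, although the blanks freed by deletions are never reused, they accumulate only linearly, so that both the per-step sweep length and the total tape span remain within the claimed $O(k^{2})$ time and $O(k)$ space.
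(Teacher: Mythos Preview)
Your proposal is correct and follows essentially the same approach as the paper: a case analysis on the leftmost symbol matching each instruction group to the corresponding bi-tag production, followed by the observation that each simulated production costs a constant number of sweeps over a word that grows by at most one per step. Your version is somewhat more carefully stated---you make the $qs$-invariant explicit, note that the configuration form forces the symbol after $e$ to lie in $A$, and track the accumulating leading blanks separately in the space bound---but these are refinements of presentation rather than a different argument.
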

\begin{proof}
At first, we show that $MB$ simulates $B$. Its initial tape working zone coincides with the initial configuration of $B$ and the control head is positioned on the leftmost symbol. If ${w=eas'}$ and ${e=e_{{h}}}$, it halts according to instruction 1.1. 

If ${w=as'}$, it stores the first symbol ${a}$ with state ${q_{{a}}}$ and deletes it via instructions 2.1, moves to the tail via instructions 2.2, replaces the first blank symbol ${\lambda }$ with the stored symbol ${a}$ via instructions 2.3 and switches to state ${ql}$ that completely corresponds to the application of production of type ${R(a)=a}$. 

If ${w=eas'}$, it stores the first symbol ${e}$ with state ${q_{{e}}}$, deletes it and moves to the left via instruction 3.1, stores the combination ${(e,a)}$ with state ${q_{{e,a}}}$ and deletes ${a}$ via instructions 3.2, moves to the tail via instructions 3.3, replaces the first blank symbol ${\lambda}$ with symbol ${\mathit{s1}_{{e,a}}}$ via instructions 3.4, replaces the second blank symbol ${\lambda}$ with symbol ${\mathit{s2}_{{e,a}}}$ via instructions 3.5, replaces the third blank symbol ${\lambda}$ with symbol ${\mathit{s3}_{{e,a}}}$ via instructions 3.6 and switches to state $ql$ that completely corresponds to the application of production of either type ${R(e,a)=AE}$ or ${R(e,a)=AAE}$.

After applying a production, the control head is positioned at the rightmost or previous to the rightmost symbol of the working zone in state $ql$. Then instructions 4.1 and 4.2 return it to the leftmost symbol. As there are no other valid sequences except described, $MB$ simulates $B$.

Neglecting the initial BTS configuration length, note that each production application extends it by unit in the worst case, thus space complexity is ${O(k)}$. Implementation of each production requires scanning the current word twice; thus, time complexity is ${O(2\cdot k\cdot k)\approx O(k^{{2}})}$. 
\end{proof}
Combining theorems~\ref{the:upnwutm} and \ref{the:mb} with results on complexity estimations of simulations ${{DIPN}\rightarrow {BTS}}$ \cite{NW09FI} and ${{TM}\rightarrow {WUTM}(2,4)}$ \cite{NE08,NW09W}, we come to the following corollary.
\begin{corollary}
UPN(14,29) simulates a given DIPN ${N}$ in exponential time and polynomial space with respect to the number of ${N}$ steps.
\end{corollary}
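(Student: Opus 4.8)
The plan is to \emph{compose} the complexity bounds along the encoding chain ${DIPN}\rightarrow{BTS}\rightarrow{TM}\rightarrow{CTS}\rightarrow{CA}$ established in Section~3, tracking how the number of simulation steps and the space requirement transform across each link, and then to feed the resulting number of WUTM(2,4) steps into Theorem~\ref{the:upnwutm}. Let $n$ denote the number of steps of the given DIPN $N$. The strategy is to show that every link of the chain costs at most a polynomial blowup in the step count, so that the number $k$ of WUTM(2,4) steps required to reproduce the $n$ steps of $N$ is bounded by a polynomial $p(n)$; the single exponential factor $5^{k}$ of Theorem~\ref{the:upnwutm} then delivers the claimed behaviour.

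First I would audit each link of the chain. The translation ${DIPN}\rightarrow{BTS}$ of \cite{ZV13} carries polynomial step and space overhead (its estimation is the one cited from \cite{NW09FI}); the translation ${BTS}\rightarrow{TM}$ is exactly Theorem~\ref{the:mb}, giving $O(k^{2})$ time and $O(k)$ space; and the remaining links ${TM}\rightarrow{CTS}\rightarrow{CA}$ together with the execution of the Rule~110 glide encoding by WUTM(2,4) are polynomial in both resources by the results of Neary and Woods \cite{NE08,NW09W} and Cook \cite{C04}, whose machines were designed expressly to run in polynomial time. Composing these polynomial bounds, I would conclude that $k=p(n)$ for some polynomial $p$, and that the configuration size handed on at every stage is likewise polynomial in $n$.

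Then I would invoke Theorem~\ref{the:upnwutm} directly with this value of $k$. The space estimate $O(k)=O(p(n))$ stays polynomial, and the time estimate $O(k\cdot 5^{k})=O\!\left(p(n)\cdot 5^{p(n)}\right)$ is exponential in $n$, the exponential factor being inherited solely from the radix-$5$ unary counting that UPN(14,29) uses to shift the simulated head. This isolates the source of the exponential time in the final link of the simulation and, combined with the polynomial space bound, establishes the corollary.

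The main obstacle will be confirming that the composed step count $k$ of WUTM(2,4) grows only polynomially in $n$; once this is secured, polynomial space is immediate from the $O(k)$ bound of Theorem~\ref{the:upnwutm}, and the lone exponential is pinned to the $5^{k}$ factor. Thus every link must contribute only a polynomial factor to the step count (and to the configuration size it passes to the next link), and the most delicate verifications are the ${DIPN}\rightarrow{BTS}$ estimation of \cite{ZV13,NW09FI}, quoted rather than reproved here, and the CTS-to-Rule~110 emulation of \cite{C04}: a super-polynomial blowup at any one of these would be amplified ruinously inside $5^{k}$ and would also break the polynomial-space claim. Taking all these guarantees in their standard (not weakly universal) reading leaves the tape-head counting of Theorem~\ref{the:upnwutm} as the only exponential in the composition.
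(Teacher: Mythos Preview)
Your proposal is correct and follows essentially the same approach as the paper: you compose the polynomial-time, polynomial-space bounds along the chain ${DIPN}\rightarrow{BTS}\rightarrow{TM}\rightarrow{WUTM}(2,4)$ (citing \cite{ZV13}, Theorem~\ref{the:mb}, and \cite{NE08,NW09W,C04} respectively) and then apply Theorem~\ref{the:upnwutm} so that the sole exponential factor $5^{k}$ arises in the final link. The paper additionally substitutes the concrete degrees to obtain the explicit bound $O(z\cdot 5^{z})$ with $z=k^{24}\cdot\log^{12}k$, but the structure of the argument is the same.
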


Definite forms of complexity functions could be obtained via substitution. $BTS$ simulates $DIPN$ \cite{ZV13} in time ${x=O(k^{{3}})}$ with space ${y=O(k)}$, where ${k}$ is the number of ${N}$ steps. $TM$ simulates $BTS$ (theorem~\ref{the:mb}) in time ${u=O(x^{{2}})}$ with space ${v=O(x)}$. WUPN(2,4) simulates $TM$ \cite{NE08,NW09W} in time ${r=O(u^{{4}}\cdot {log}^{{2}}u)}$. And UPN(14,29) simulates WUPN(2,4) (theorem~\ref{the:upnwutm}) in time ${O(r\cdot 5^{{r}})}$ with space ${O(r)}$. Thus UPN(14,29) simulates a given $DIPN$ in time ${O(z\cdot 5^{{z}})}$ with space ${O(z)}$, where ${z=k^{24}\cdot {log}^{12}k}$.

\section{Conclusions}

Thus, the universal deterministic inhibitor Petri net with 14 places and 29 transitions was constructed based on direct encoding of Neary and Woods weakly universal Turing machine with 2 states and 4 symbols using early presented technique \cite{ZV13}; universal net contains 13 transitions lesser comparing \cite{ZV13}. 

A technique for the blank words simulation of the weakly universal TM tape was developed that consists in adding blank words' codes when reaching zero value of the tape codes, so constructed DIPN is not weak. Moreover, a simulation of a bi-tag system by a TM was developed to fill the gap in the chain of source DIPN translation.

Resulting universal Petri net runs in exponential time and polynomial space with respect to the target DIPN transitions' firing sequence length. Constructing efficient UPN is a direction for future work. But the main obstacle consists in the arithmetic encoding/decoding technique via increment/decrement operations implemented by PN transitions.

\section*{Acknowledgement}

The author would like to thank Turlough Neary for his help in improving the readability of the paper.


\nocite{*}
\bibliographystyle{eptcs}

\end{document}